\documentclass[twocolumn]{autart}

\usepackage{graphicx}
\usepackage{amsmath,amssymb,amsfonts,bm}
\usepackage{booktabs}
\usepackage[authoryear]{natbib}
\usepackage{tikz}
\graphicspath{{figures/}{figures_eps/}{./}}

\DeclareMathOperator{\diag}{diag}
\DeclareMathOperator{\col}{col}
\DeclareMathOperator{\atanh}{atanh}

\newcommand{\R}{\mathbb{R}}
\newcommand{\one}{\mathbf{1}}
\newcommand{\abs}[1]{\left|#1\right|}
\newcommand{\pos}[1]{\left[#1\right]_{+}}

\begin{document}
\journal{Arxiv} 

\begin{frontmatter}

\title{Prescribed Performance Leader-Following Consensus with Event-Based Broadcasting} 




\author[Athena]{Athanasios K. Gkesoulis\corauthref{cor1}}
\ead{a.gkesoulis@athenarc.gr},
\author[NTUA]{Haris E. Psillakis}
\ead{hpsilakis@central.ntua.gr},
\author[Thessaly]{George C. Karras}
\ead{gkarras@uth.gr},
\author[Patras]{Charalampos P. Bechlioulis}
\ead{chmpechl@upatras.gr}

\corauth[cor1]{Corresponding author.}

\address[Athena]{Robotics Institute, Athena Research Center,
15125 Marousi, Greece}

\address[NTUA]{School of Electrical and Computer Engineering,
National Technical University of Athens, 15780 Athens, Greece}

\address[Thessaly]{Department of Informatics and Telecommunications,
University of Thessaly, 35100 Lamia, Greece}

\address[Patras]{Department of Electrical and Computer Engineering,
University of Patras, 26504 Patras, Greece}

\begin{keyword}                           
 event-based; consensus; multi-agent systems; prescribed performance; leader-following;           
\end{keyword}                             

\begin{abstract}                          
This paper develops a prescribed performance leader-following consensus
protocol for directed networks under event-based broadcasting. Each node
schedules its transmissions using only its current state and its last broadcast
value, while followers use held neighbor and leader samples between
broadcasts. To accommodate jumps in sampled neighborhood errors caused by
asynchronous receptions, a decaying receiver-side performance correction term
temporarily modifies the prescribed performance function. For a directed network containing a spanning tree rooted at a time-varying leader with bounded trajectory and
derivative, we prove existence and uniqueness of complete non-Zeno solutions,
boundedness of all closed-loop signals, and explicit prescribed bounds for the
sampled and continuous neighborhood disagreements and the leader-following error. These bounds
quantify the effects of the communication topology and broadcast thresholds.
Simulations verify the prescribed performance bounds and illustrate the
event-based communication mechanism.
\end{abstract}

\end{frontmatter}

\section{Introduction}
\label{sec:introduction}


Consensus requires networked agents to agree on a common state or trajectory
using locally exchanged information \citep{ren2008distributed}. 
Despite extensive research on this problem, reducing communication while guaranteeing prescribed transient and steady-state performance remains a fundamental challenge. 
Event-based sampling and event-triggered feedback were initially developed for
single systems \citep{Astrom2002Lebesgue,Tabuada2007} and later extended to
distributed multi-agent control \citep{Dimarogonas2012}. Whether these methods
reduce communication depends on the information used by the trigger: conditions
involving current relative states require continuous access to neighboring
states, whereas self-triggered methods predict future update times from
information available at the preceding event
\citep{Dimarogonas2012,Ding2018Survey}. An alternative is transmitter-local
broadcasting, where each agent schedules its transmissions using locally
available information \citep{WangLemmon2011}. In particular,
\citet{Seyboth2013Broadcasting} use the difference between the current state
and its last broadcast for average consensus. Related communication-saving designs address dynamic average consensus
\citep{Kia2015DynamicAverage} and consensus over directed graphs
\citep{Li2021}, however, without prescribing transient and
steady-state leader-following bounds.


Prescribed performance control (PPC) addresses this requirement by
constraining an error within a designer-selected time-varying envelope and mapping it through a barrier transformation \citep{4639441}. PPC has subsequently been extended to distributed synchronization \citep{Bechlioulis2017Synchronization}, and the combination of PPC and event-triggered mechanisms has attracted increasing attention. 
Event-triggered PPC has been studied for pinning synchronization, fuzzy
consensus with quantization, simultaneous state and controller-output
triggering, prescribed-settling-time consensus, and stochastic bipartite
consensus
\citep{Fan2020PPCEvent,Wang2021PPCEvent,Zhang2022FuzzyPPC,
Chen2022SettlingPPC,Ren2023BipartitePPC}. Related developments treat
virtual-leader networks, output synchronization, neural containment, unknown
saturation and control directions, adaptive performance boundaries,
fault-tolerant consensus, and switched containment
\citep{Yue2023PPCEvent,Peng2024CooperativePPC,Tang2025NeuralContainment,
Li2026EventDrivenPPC,Qiu2026SelfAdjustingPPC,Zheng2026FixedTimePPC,
Hou2026ContainmentPPC}. These results, however, do not simultaneously provide
prescribed leader-following performance over a leader-rooted directed graph
and schedule every transmission, including that of the leader, using only
the transmitter's current state and its last broadcast value.

Motivated by this gap, we develop a prescribed performance leader-following
consensus protocol under event-based broadcasting. Each node transmits when
the difference between its current state and its last broadcast reaches a
local threshold. No neighbor-, leader-, edge-, relative-state-, or
control-input-dependent quantity is used by the trigger. Followers evolve
continuously using their current states and held incoming samples. Because an
asynchronous reception may increase a sampled neighborhood error
instantaneously, each receiver employs a decaying performance modification term
that temporarily modifies its performance function. For leader-rooted directed
graphs, the resulting closed-loop system has unique, complete, non-Zeno
solutions, all signals remain bounded, and prescribed sampled, continuous, and
leader-following error bounds are guaranteed.


The closest event-triggered PPC schemes differ in both graph assumptions and
information requirements. The leaderless designs in
\citep{Hu2024PPCEvent,Hou2024UnifiedPPC,Xu2025PPCEvent} use fixed undirected
graphs: \citet{Hu2024PPCEvent} considers an undirected tree or a connected
undirected graph, whereas \citet{Hou2024UnifiedPPC} and
\citet{Xu2025PPCEvent} assume connected undirected graphs.
\citet{Wang2024LeaderPerformance} considers a directed graph with a spanning
tree rooted at the leader, but assumes that the leader is continuously
available. Moreover, the event-triggered implementations in
\citep{Hu2024PPCEvent,Hou2024UnifiedPPC,Xu2025PPCEvent} use edge or relative
measurements, while \citet{Wang2024LeaderPerformance} monitors neighboring
states. Their self-triggered extensions avoid current neighbor measurements
but retain event calculations involving edge dynamics, relative quantities,
neighboring control signals, or network-dependent bounds. Here, each
transmitter uses only its current state and last broadcast, and each receiver
uses its current state and held incoming samples.

The contributions are threefold: (i) a transmitter-local broadcasting
architecture is developed for prescribed performance leader-following
consensus over leader-rooted directed graphs, including event-based leader
transmissions; (ii) a receiver-side performance modification term is introduced to accommodate reception-induced jumps while preserving continuous follower evolution between broadcasts; and (iii) explicit topology- and threshold-dependent prescribed
performance bounds are established that quantify the designer-chosen communication--performance tradeoff. To the
best of the authors' knowledge, this is the first distributed PPC result over
directed graphs in which every transmission is scheduled exclusively from the
transmitter's current state and its own last broadcast value.

The remainder of the article is organized as follows. Section~\ref{sec:problem} formulates the problem,
Section~\ref{sec:design} presents the protocol, and
Section~\ref{sec:analysis} establishes its properties.
Section~\ref{sec:simulation} presents simulations, and
Section~\ref{sec:conclusion} concludes the paper.

\section{Problem Formulation}\label{sec:problem}

\subsection{Notation and Preliminaries}

Let $\R$ and $\R_{\geq0}$ denote the real and nonnegative real numbers,
respectively. For a
scalar $r\in\R$, $\pos{r}:=\max\{r,0\}$. The symbols
$\one_N:=[1,\ldots,1]^T\in\R^N$ and 
$\col(x_1,\ldots,x_N):=[x_1^T\ \cdots\ x_N^T]^T$ are used
throughout. For a finite set $\mathcal S$, $\abs{\mathcal S}$ denotes its
cardinality, $\bm{1}_{\{\mathsf P\}}:=1$ if proposition $\mathsf P$ is true
and $0$ otherwise, and superscripts $-$ and $+$ denote, respectively, the
values immediately before and after a communication event. The leader is indexed by $0$, $\mathcal V:=\{1,\ldots,N\}$ is the follower
set, and $\overline{\mathcal V}:=\{0\}\cup\mathcal V$. Throughout,
$i,j\in\mathcal V$ index followers, whereas
$q\in\overline{\mathcal V}$ indexes a generic transmitter. The weighted
adjacency matrix is $A=[a_{ij}]$, where $a_{ii}=0$ and $a_{ij}>0$ means that
follower $i$ receives from follower $j$. The pinning weight satisfies
$b_i\geq0$, with $b_i>0$ precisely when follower $i$ receives from the leader.
Define
$L:=\diag(\sum_{j=1}^{N}a_{1j},\ldots,\sum_{j=1}^{N}a_{Nj})-A$,
$B:=\diag(b_1,\ldots,b_N)$, $b:=\col(b_1,\ldots,b_N)$,
$W:=L+B$, $h_i:=\sum_{j=1}^{N}a_{ij}+b_i$, and
$H:=\diag(h_1,\ldots,h_N)$. The incoming set is
$\mathcal Q_i:=\{j\in\mathcal V:a_{ij}>0\}\cup\{0:b_i>0\}$.
The augmented graph contains a directed spanning tree rooted at the leader if
every follower is reachable from node $0$ through a directed path. Under this
property, $W$ is a nonsingular $M$-matrix and $h_i>0$ for every
$i\in\mathcal V$ \citep{ren2008distributed}.

\subsection{Leader-Following Consensus Problem}

Let $z_0(t)\in\R$ denote the leader trajectory and $z_i(t)\in\R$ the state
of follower $i\in\mathcal V$. Each follower is modeled as a single integrator,
\begin{equation}\label{eq:follower_dynamics}
\dot z_i=u_i,
\end{equation}
where $u_i(t)\in\R$ denotes its control input. The following assumptions are imposed on the augmented graph and the leader
trajectory.
\begin{assum}\label{ass:graph}
The augmented graph is directed and contains a directed spanning tree rooted
at the leader.
\end{assum}
\begin{assum}
\label{ass:leader}
The leader trajectory is continuously differentiable. Constants
$\underline z_0\leq\overline z_0$ and $\bar v_0\geq0$ exist such that
$\underline z_0\leq z_0(t)\leq\overline z_0,$ and $\abs{\dot z_0(t)}\leq\bar v_0,$ for all $t\geq0.$ The constants are considered unknown to the controller.
\end{assum}
Define the stacked follower state $z:=\col(z_1,\ldots,z_N)$ and the continuous
neighborhood disagreement
\begin{equation}
e:=W(z-\one_Nz_0),
\label{eq:actual_error_stacked}
\end{equation}
or, componentwise,
\begin{equation}
e_i
:=\sum_{j=1}^{N}a_{ij}(z_i-z_j)+b_i(z_i-z_0).
\label{eq:actual_error}
\end{equation}
The objective is to design a broadcasting mechanism for every
$q\in\overline{\mathcal V}$ and a control input for every
$i\in\mathcal V$ that guarantee prescribed sampled, continuous, and
leader-following error bounds using transmitter-local triggering conditions.
The resulting solutions must be unique, complete, bounded, and non-Zeno, and
the implementation must require no global spectral information.

\section{Control Design}
\label{sec:design}

We first introduce the event-based broadcasting mechanism. Let
$t_q^\ell$ denote the $\ell$th transmission time of node
$q\in\overline{\mathcal V}$. Each node transmits its state to its
out-neighbors only at these instants, and its most recently transmitted value
$\hat z_q$ is held according to
\begin{equation}
\hat z_q(t)=z_q(t_q^\ell),
\qquad
t\in[t_q^\ell,t_q^{\ell+1}).
\label{eq:sample_hold}
\end{equation}
The next transmission time is defined by
\begin{equation}
t_q^{\ell+1}
:=\inf\left\{
t>t_q^\ell:
\abs{z_q(t)-\hat z_q(t^-)}\geq c_q
\right\},
\label{eq:trigger_rule}
\end{equation}
where $c_q>0$ is selected by the designer and
$\inf\varnothing:=+\infty$. We assume that all nodes transmit at $t=0$, so that
$\hat z_q(0)=z_q(0)$. Let $t_\ell$ denote a communication instant,  and let $\mathcal S_\ell\subseteq\overline{\mathcal V}$ denote the set of nodes that trigger at this instant. Their samples are updated simultaneously according to
\begin{equation}
\hat z_q^+=
\begin{cases}
z_q(t_\ell),&q\in\mathcal S_\ell,\\
\hat z_q^-,&q\notin\mathcal S_\ell.
\end{cases}
\label{eq:sample_reset}
\end{equation}
Using the held samples, follower $i$ constructs the sampled neighborhood
disagreement
\begin{equation}
\hat e_i
:=\sum_{j=1}^{N}a_{ij}(z_i-\hat z_j)+b_i(z_i-\hat z_0),
\label{eq:sampled_error}
\end{equation}
or, in vector form,
\begin{equation}
\hat e:=Hz-A\hat z-b\hat z_0,
\label{eq:sampled_error_stacked}
\end{equation}
where $\hat z:=\col(\hat z_1,\ldots,\hat z_N)$. Define the transmitter sampling errors $\eta:=z-\hat z$ and $\eta_0:=z_0-\hat z_0$. Equations \eqref{eq:actual_error_stacked} and
\eqref{eq:sampled_error_stacked} give $\hat e=e+A\eta+b\eta_0$. 
For each receiver, define the threshold mismatch budget
\begin{equation}
\Delta_i
:=\sum_{j=1}^{N}a_{ij}c_{j}+b_ic_0.
\label{eq:Delta}
\end{equation}
By construction, the triggering rule \eqref{eq:trigger_rule} ensures that
$\abs{z_q-\hat z_q}\leq c_q$. Consequently, 
\begin{equation}
\abs{e_i(t)-\hat e_i(t)}
\leq\Delta_i,
\qquad t\geq0.
\label{eq:mismatch_bound}
\end{equation}
For each follower, select the prescribed performance function
\begin{equation}
\rho_i(t)
:=
\bigl(\rho_i^{0}-\rho_i^{\infty}\bigr)
\exp(-\ell_it)
+\rho_i^{\infty},
\label{eq:rho}
\end{equation}
where $\rho_i^{0}>\rho_i^{\infty}>0$ and
$\ell_i>0$. Since all nodes transmit at $t=0$,
$\hat e_i(0)=e_i(0)$. The initial performance condition is
\begin{equation}
\abs{e_i(0)}<\rho_i^{0}.
\label{eq:initial_condition}
\end{equation}

A conventional PPC implementation would normalize the sampled disagreement
by $\rho_i$ and apply a barrier transformation defined only while
$\abs{\hat e_i}<\rho_i$. This inequality must therefore remain
satisfied continuously and must not be violated at communication instants.
However, although the follower states are continuous, replacing held samples
may cause $\hat e_i$ to jump. At a communication instant $t_\ell$,
\begin{align}
\hat e_i^{+}-\hat e_i^{-}
={}&-\sum_{j\in\mathcal S_\ell\cap\mathcal V}
a_{ij}(z_j-\hat z_j^-)
\nonumber\\
&-\bm{1}_{\{0\in\mathcal S_\ell\}}
b_i(z_0-\hat z_0^-),
\label{eq:error_jump_exact}
\end{align}
where $\bm{1}_{\{\cdot\}}$ denotes the indicator function. Consequently,
\begin{equation}
\abs{\hat e_i^{+}-\hat e_i^{-}}
\leq\Delta_i.
\label{eq:error_jump_bound}
\end{equation}
This jump need not decrease $\abs{\hat e_i}$ and may therefore cause the
sampled disagreement to cross its prescribed performance boundary, as illustrated in the following example.

\begin{exmp}
\label{ex:jump_violation}
Consider a receiver with one incoming edge of weight $a_{ij}=1$ and a
transmitter threshold $c_{j}=0.25$, so that $\Delta_i=0.25$. Suppose
that, immediately before the neighbor transmits,
$\rho_i=1$ and $\hat e_i^{-}=0.90$. If
$z_j-\hat z_j^-=-0.25$, reception of the new sample gives $\hat e_i^{+}
=\hat e_i^{-}-a_{ij}(z_j-\hat z_j^-)
=1.15.$
Thus, $\abs{\hat e_i^{+}}/\rho_i=1.15>1$, and the conventional
logarithmic PPC transformation is no longer defined. If the performance
boundary is instead augmented by the worst-case jump budget
$\Delta_i=0.25$, then $\frac{\abs{\hat e_i^{+}}}
{\rho_i+\Delta_i}
=\frac{1.15}{1.25}=0.92<1.$ 
\end{exmp}
Motivated by Example~\ref{ex:jump_violation},  
receiver $i$ maintains a nonnegative performance modification term $s_i$, initialized
as $s_i(0):=0$ and governed between receptions from its
in-neighbors by
\begin{equation}
\dot s_i=-\theta_is_i,
\label{eq:s_flow}
\end{equation}
where $\theta_i>0$ is a designer-selected constant. At a communication event, receiver $i$ computes its sampled error before
and after applying all newly received samples. The aggregate update is called
adverse if 
$\abs{\hat e_i^{+}}>\abs{\hat e_i^{-}}$. The performance modification term is reset
according to
\begin{equation}
s_i^{+}
=
\begin{cases}
s_i^{-},
&\textrm{if}~\abs{\hat e_i^{+}}\leq\abs{\hat e_i^{-}},\\[1mm]
s_i^{-}+\Delta_i,
&\textrm{if}~\abs{\hat e_i^{+}}>\abs{\hat e_i^{-}}.
\end{cases}
\label{eq:s_reset}
\end{equation}
Thus, an adverse reception increases the performance modification by the
worst-case budget, after which it decays exponentially according to
\eqref{eq:s_flow}. The modified performance function and normalized sampled
disagreement are defined, respectively, by
$\varrho_i:=\rho_i+s_i$ and $\hat\xi_i:=\hat e_i/\varrho_i$. For $\xi\in(-1,1)$, let $T(\xi):=\atanh(\xi)$ and
$\phi_T(\xi):=T'(\xi)T(\xi)=\atanh(\xi)/(1-\xi^2)$. 
For any $k_i>0$, the proposed control law between between reception events is
\begin{equation}
u_i:=-\frac{k_i}{\varrho_i}
\phi_T(\hat\xi_i),
\label{eq:consensus_dynamics}
\end{equation}
for each $i\in\mathcal V$. Hence, the controller uses the current follower state only through the locally constructed sampled disagreement, while the performance function temporarily
absorbs adverse reception jumps and subsequently returns toward the prescribed
baseline profile.

\begin{rem}
\label{rem:vector_extension}
For $z_0\in\R^p$, the proposed design and analysis can be applied
componentwise, with separately selected baseline performance profiles and
communication thresholds.
\end{rem}

\begin{rem}
Implementation requires receiver $i$ to have an estimate of  $\Delta_i$ and evaluate
$\hat e_i^{-}$ and $\hat e_i^{+}$ at receptions, thus no global spectral
information is required.
\end{rem}

\section{Stability and Performance Analysis}
\label{sec:analysis}

The following result establishes the stability and performance properties of the
resulting closed-loop system.

\begin{thm}
\label{thm:main}
Consider system~\eqref{eq:follower_dynamics} under the control law~\eqref{eq:consensus_dynamics}.
Suppose that Assumptions~\ref{ass:graph} and~\ref{ass:leader} hold and that
\eqref{eq:initial_condition} is satisfied. Then, the closed-loop system admits a unique, complete, non-Zeno solution, and all  signals remain bounded.
Moreover, for every $i\in\mathcal V$ and all $t\geq0$,
\begin{align}
\abs{\hat e_i(t)}
&<\rho_i(t)+s_i(t),
\label{eq:sampled_local}\\
\abs{e_i(t)}
&<\rho_i(t)+\Delta_i+s_i(t),
\label{eq:actual_local}
\end{align}
where
\begin{equation}
    0\leq s_i(t)\leq\bar s_i,
\label{eq:s_result}
\end{equation}
with
\begin{equation}
\bar s_i
:=
\Delta_i
\sum_{q\in\mathcal Q_i}
\frac{1}{1-\exp(-\theta_i\tau_q)}.
\label{eq:sbar}
\end{equation}
Constants $\tau_q\in(0,+\infty]$ denote the inter-event-time lower bound
associated with transmitter $q\in\overline{\mathcal V}$, as constructed in
the proof. A term corresponding to $\tau_q=+\infty$, i.e. no further broadcast occurs, is interpreted as
$1$.

Finally, define
$\rho(t):=\col(\rho_1(t),\ldots,\rho_N(t))$,
$\Delta:=\col(\Delta_1,\ldots,\Delta_N)$, and
$\bar s:=\col(\bar s_1,\ldots,\bar s_N)$. Then
\begin{equation}
\left\|z(t)-\one_Nz_0(t)\right\|_2
\leq
\frac{1}{\sigma_{\min}(W)}
\left\|\rho(t)+\Delta+\bar s\right\|_2,
\label{eq:leader_result}
\end{equation}
for all $t\geq0$, where $\sigma_{\min}(W)$ denotes the smallest singular value of $W$.
\end{thm}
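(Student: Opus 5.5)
The proof is a maximal-solution argument for the hybrid closed loop. Flows follow \eqref{eq:follower_dynamics}, \eqref{eq:consensus_dynamics}, \eqref{eq:s_flow} on the open set where $\abs{\hat\xi_i}<1$. Jumps follow \eqref{eq:sample_reset} and \eqref{eq:s_reset}.

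\emph{Step 1: local existence.} Between events the right-hand side is locally Lipschitz in $(z,s)$ on $\abs{\hat\xi_i}<1$, since $\hat z,\hat z_0$ are constant and $z_0$ enters only through the trigger. Picard--Lindel\"of then gives a unique maximal solution on some $[0,T_{\max})$. Event times are determined uniquely by \eqref{eq:trigger_rule}.

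\emph{Step 2: invariance across jumps.} At any event, if the update is not adverse, then $\abs{\hat e_i^+}\le\abs{\hat e_i^-}<\varrho_i^-=\varrho_i^+$. If it is adverse, \eqref{eq:error_jump_bound} gives $\abs{\hat e_i^+}\le\abs{\hat e_i^-}+\Delta_i<\varrho_i^-+\Delta_i=\varrho_i^+$. Hence $\abs{\hat\xi_i}<1$ is preserved by jumps.

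\emph{Step 3: uniform bound between jumps.} On flow intervals I will use $V_i=\tfrac12T(\hat\xi_i)^2$. Its derivative is
\[
\dot V_i=\phi_T(\hat\xi_i)\bigl(\dot{\hat e}_i-\hat\xi_i\dot\varrho_i\bigr)/\varrho_i ,
\qquad
\dot{\hat e}_i=h_iu_i ,
\]
since the held samples are constant. Substituting the control law yields
\[
\dot V_i=-\frac{k_ih_i}{\varrho_i^2}\phi_T^2+\phi_T\,\hat\xi_i\,\frac{\ell_i(\rho_i-\rho_i^\infty)+\theta_is_i}{\varrho_i}.
\]
The second term is at most $\phi_T\,\hat\xi_i\,\max(\ell_i,\theta_i)$. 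So $\dot V_i<0$ whenever $\abs{\phi_T}$ exceeds a constant depending on $\varrho_i$. Because $\varrho_i\in[\rho_i^\infty,\rho_i^0+\bar s_i]$ (using Step 4), this gives a bound $\abs{\hat\xi_i}\le\bar\xi_i<1$ along flows. The bound must be combined with the jumps, which can push $\abs{\hat\xi_i}$ upward. To handle this I note that after an adverse jump, $\abs{\hat\xi_i^+}\le(\abs{\hat e_i^-}+\Delta_i)/(\varrho_i^-+\Delta_i)\le\max(\abs{\hat\xi_i^-},\cdot)$; the map $x\mapsto(x+\Delta)/(\varrho+\Delta)$ does not increase a ratio that is below $1$ beyond $\max\{\text{ratio},\ \cdot\}$. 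Concretely, $(a+\Delta)/(b+\Delta)\le\max\{a/b,1\}$ and is strictly less than $1$, but to get a uniform $\bar\xi_i$ I will instead bound it by $\max\{\abs{\hat\xi_i^-},\ (\rho_i^0+\bar s_i)/(\rho_i^0+\bar s_i+\Delta_i)\cdot\}$. I expect this uniform-ratio-under-jumps bookkeeping to be the main obstacle. My first attempt will bound $V_i$ by a constant depending on $V_i(0)$, the flow-level set, and the level set reached after one jump from that level.

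\emph{Step 4: bounded $s_i$ and non-Zeno.} Boundedness of $\hat\xi_i$ makes $u_i$ bounded, so the $z_i$ are Lipschitz with some constant $\bar u_i$, and $z_0$ is Lipschitz with $\bar v_0$. Since $\eta_q=0$ right after each broadcast, consecutive broadcasts of $q$ are separated by $\tau_q=c_q/\bar u_q$ (take $\tau_q=+\infty$ if the speed is zero). Each transmitter $q\in\mathcal Q_i$ therefore contributes increments of size $\Delta_i$ at most once per $\tau_q$. Summing the geometric series of decayed increments gives \eqref{eq:sbar}. This creates a circularity, because $\bar u_q$ depends on $\bar s$. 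I will break it with a contradiction-on-first-exit argument: fix a candidate level for $\bar\xi$, derive $\tau_q$ and $\bar s$ from it, and show the level cannot be exceeded. Once the $\tau_q$ are positive, the solution has finitely many jumps on bounded intervals, so it is non-Zeno. Boundedness of all signals then excludes finite escape, giving $T_{\max}=\infty$ and completeness.

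\emph{Step 5: error bounds.} \eqref{eq:sampled_local} follows from the invariance established above. \eqref{eq:actual_local} follows from \eqref{eq:mismatch_bound}. For \eqref{eq:leader_result}, write $z-\one_Nz_0=W^{-1}e$ and use $\|W^{-1}\|_2=1/\sigma_{\min}(W)$, together with the componentwise bound $\abs{e_i}\le\rho_i+\Delta_i+\bar s_i$.
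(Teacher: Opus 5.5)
Your skeleton (local existence, invariance under jumps, dwell times, the geometric series for $s_i$, and $W^{-1}$ for the leader bound) largely matches the paper's. There is a genuine gap, however. The central step is a uniform bound $\abs{\hat\xi_i}\le\beta_i<1$ that does not depend on $\bar s_i$, and you never secure it; the candidate-level argument does not remove the circularity you point out. For a candidate level $\bar\xi$, follower speeds are bounded by $k_j\phi_T(\bar\xi)/\rho_j^\infty$, so $\tau_j\propto1/\phi_T(\bar\xi)$. Since $1/(1-e^{-x})\geq1/x$, $\bar s_i$ then grows at least linearly in $\phi_T(\bar\xi)$ whenever $i$ has a follower in-neighbor. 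Your flow estimate gives $\dot V_i<0$ only when $\abs{\phi_T(\hat\xi_i)}>\varrho_i^2\max(\ell_i,\theta_i)/(k_ih_i)$. With $\varrho_i$ ranging up to $\rho_i^0+\bar s_i$, this threshold grows quadratically in $\phi_T(\bar\xi)$. A common level is therefore not self-consistent in general. For a receiver $i$ whose only in-neighbor is follower $j$, your estimates require at least $k_i\rho_j^\infty\geq4\rho_i^0k_j$, which fails for equal gains unless $\rho_j^\infty\geq4\rho_i^0$. Choosing levels node by node from the root down helps only on acyclic graphs. With directed cycles among followers the levels are mutually coupled in the same superlinear way, yet the theorem covers every leader-rooted digraph and every $k_i>0$. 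The jump bookkeeping has a second gap. In ratio form, an adverse event can raise $\abs{\hat\xi_i}$ to $(\abs{\hat e_i^-}+\Delta_i)/(\varrho_i^-+\Delta_i)>\abs{\hat\xi_i^-}$. Adverse events arriving in quick succession from different in-neighbors push this ratio toward $1$, and a bound built from one jump out of the flow level set does not control that accumulation.

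The missing ingredient is an a priori bound on the unnormalized error that does not involve $s_i$. The paper first shows that all current and held values stay in $[\underline z,\overline z]$. At a first exit through $\overline z$ one has $\hat e_i\geq0$, hence $u_i\le0$, and symmetrically at $\underline z$; this gives $\abs{\hat e_i}\le h_iD=:\bar e_i$. It then tracks the additive margin $m_i=\varrho_i-\abs{\hat e_i}$ instead of the ratio. Your own Step~2 computation shows $m_i^+\geq m_i^-$ at every event, adverse or not, so accumulation is harmless. On the level set $m_i=\underline d_i$ one has $\varrho_i\le\bar e_i+\underline d_i$, $s_i\le\pos{\bar e_i+\underline d_i-\rho_i^\infty}$ and $\abs{\hat\xi_i}\geq1-\underline d_i/\rho_i^\infty$. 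Hence $\dot m_i>0$ once $\underline d_i$ is small, with every constant free of $\bar s_i$. This yields $\abs{\hat\xi_i}\le\bar e_i/(\bar e_i+\underline d_i)=\beta_i$, after which $\bar v_i$, $\tau_q$ and $\bar s_i$ follow in that order with no circularity. Without the interval invariance you only know that $\abs{\hat e_i}$ is nonincreasing along flows and may grow by $\Delta_i$ at each adverse event. Nothing in your argument then prevents $\varrho_i$ and $\abs{\hat\xi_i}$ from being large at the same time.
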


\begin{pf}
\emph{Local well-posedness and maximal solution:} 
Let $\Omega_{\xi}:=(-1,1)^N, 
\Omega_{\psi}:=\Omega_{\xi}\times\R_{\geq0}^{N}$, and $\psi:=\col(\hat\xi,s)$, where
$\hat\xi:=\col(\hat\xi_1,\ldots,\hat\xi_N)$ and
$s:=\col(s_1,\ldots,s_N)$. The augmented state is defined in $\Omega_{\chi}:=
\Omega_{\xi}\times\R_{\geq0}^{N}\times\R^N\times\R$ as $\chi:=\col(\hat\xi,s,\hat z,\hat z_0).$ Since all samples are initialized at $t=0$ and $s(0)=0$, condition
\eqref{eq:initial_condition} gives
$\hat\xi(0)\in\Omega_{\xi}$ and $\chi(0)\in\Omega_{\chi}$. At an event time, the stored values of all triggered transmitters are replaced
by their current values, the remaining stored values are left unchanged, and
the reset \eqref{eq:s_reset} is applied after the aggregate sample update. Once
the pre-event state and the triggering set $\mathcal S_\ell$ are known, this
reset is single-valued. Since $\Delta_i\geq0$, it maps
$s\in\R_{\geq0}^{N}$ into $\R_{\geq0}^{N}$. The post-event normalized error is
then determined uniquely by $\hat\xi_i^{+}:=\hat e_i^{+}/(\rho_i+s_i^{+})$. 
Thus, the jump map for $\chi$ is deterministic wherever its post-event value
belongs to $\Omega_{\chi}$. Fix two consecutive global event instants $t_\ell<t_{\ell+1}$, with the
convention that $t_{\ell+1}=t_{\max}$ when no later event occurs before the
maximal endpoint. On the open interval $(t_\ell,t_{\ell+1})$, all stored
samples are constant, i.e. $\dot{\hat z}=0$ and $\dot{\hat z}_0=0$. Equations \eqref{eq:sampled_error} and \eqref{eq:consensus_dynamics} imply
\begin{equation}
\dot{\hat e}_i
=h_i\dot z_i
=-\frac{h_i k_i}{\varrho_i}
\phi_T(\hat\xi_i).
\label{eq:ehat_flow}
\end{equation}
Since
$\dot\varrho_i=\dot\rho_i-\theta_is_i$,
\begin{align}
\dot{\hat\xi}_i
=-\frac{h_i k_i}{\varrho_i^{2}}
\phi_T(\hat\xi_i)
-\frac{\dot\rho_i-\theta_is_i}
{\varrho_i}\hat\xi_i.
\label{eq:xi_flow}
\end{align}
The right-hand side is continuous in time and locally Lipschitz in
$(\hat\xi_i,s_i)$ on the set $(-1,1)\times\mathbb R_{\geq0}$. Hence, a unique solution exists from every
admissible post-event state \cite{sontag1998mathematical}. Because $h_i>0$,
 the follower state is
recovered uniquely during flows from $z_i=\frac{1}{h_i}
\left(
\varrho_i\hat\xi_i
+\sum_{j=1}^{N}a_{ij}\hat z_j+b_i\hat z_0
\right).$ Let $[0,t_{\max})$ be the maximal hybrid interval obtained by concatenating
these unique solutions and deterministic resets while the transformed state
remains in $\Omega_{\chi}$. By construction, $\hat\xi(t)\in\Omega_{\xi}$, 
$s(t)\in\R_{\geq0}^{N}$, for all $t\in[0,t_{\max})$. All bounds below are first established on this maximal interval and are
then used to prove that $t_{\max}=+\infty$. The sampled and continuous errors differ only because the neighbors and the
leader enter through stored values as dictated by \eqref{eq:mismatch_bound} on $[0,t_{\max})$. 

\emph{Invariant state interval:} We next show that the follower states remain in the convex interval spanned
by their initial values and the bounded leader interval. Define $\underline z
:=\min\{\underline z_0,z_1(0),\ldots,z_N(0)\}$, $\overline z
:=\max\{\overline z_0,z_1(0),\ldots,z_N(0)\}$, and $D:=\overline z-\underline z$. Suppose that $t^\star$ is the
first time at which a follower state leaves the interval
$[\underline z,\overline z]$. Up to $t^\star$, all current and stored follower
values belong to this interval, and Assumption \ref{ass:leader} ensures the
same property for the current and stored leader values. If
$z_i(t^\star)=\overline z$, then \eqref{eq:sampled_error} gives
$\hat e_i(t^\star)\geq0$ and therefore
$\dot z_i(t^\star)\leq0$. Similarly,
$z_i(t^\star)=\underline z$ implies
$\hat e_i(t^\star)\leq0$ and $\dot z_i(t^\star)\geq0$. Since the physical
states do not jump at communication events, a first exit is impossible. Thus,
\begin{equation}
z_i,\hat z_i,z_0,\hat z_0
\in[\underline z,\overline z]
\quad \forall t \in [0,t_{\max}),
\label{eq:interval_invariance}
\end{equation}
which implies
\begin{equation}
\abs{\hat e_i(t)}\leq h_iD=: \bar e_i,
\qquad \forall t\in[0,t_{\max}).
\label{eq:ehat_bound}
\end{equation}

\emph{Uniform separation from the performance boundary:} Define $m_i:=\varrho_i-\abs{\hat e_i}$. Initially,
$m_i(0)=\rho_i^{0}-\abs{e_i(0)}
=:d_i^{0}>0$. Moreover, the function
$\phi_T$ is odd and strictly increasing on $(-1,1)$ because
$\phi_T'(\xi)=[1+2\xi\atanh(\xi)]/(1-\xi^2)^2>0$.
For the positive gains $k_i$ and $\theta_i$, there exists some  constant $\underline d_i$ satisfying $0<\underline d_i<
\min\{d_i^{0},\rho_i^{\infty}\}$.
and
\begin{align}
&\frac{h_i k_i}
{\bar e_i+\underline d_i}
\phi_T\left(
1-\frac{\underline d_i}{\rho_i^{\infty}}
\right)
\nonumber\\
&\quad>
\ell_i(\rho_i^{0}-\rho_i^{\infty})
+\theta_i
\pos{\bar e_i+\underline d_i
-\rho_i^{\infty}}.
\label{eq:d_condition}
\end{align}
Such a constant $\underline d_i$ always exists, since for every fixed
$k_i>0$ and $\theta_i>0$, 
as $\underline d_i$ approaches zero from above, the argument of $\phi_T$ approaches the
singular boundary $1$, and therefore the left-hand side of \eqref{eq:d_condition} tends to $+\infty$, whereas its right-hand side remains finite. At a non-adverse event, \eqref{eq:s_reset} and the definition of the event give
$m_i^{+}\geq m_i^{-}$. At an adverse event,
\eqref{eq:error_jump_bound} yields
\begin{align}
m_i^{+}
&=\rho_i+s_i^{-}+\Delta_i
-\abs{\hat e_i^{+}}
\nonumber\\
&\geq\rho_i+s_i^{-}+\Delta_i
-\abs{\hat e_i^{-}}-\Delta_i
=m_i^{-}.
\label{eq:margin_jump}
\end{align}
Hence, no communication event decreases the margin. It remains to exclude a solution crossing of
$m_i=\underline d_i$. Suppose first that
$\hat e_i>0$ on this boundary. Then,
$\hat\xi_i\geq1-\underline d_i/\rho_i^{\infty}$,
$\varrho_i\leq\bar e_i+\underline d_i$, and
$s_i\leq
\pos{\bar e_i+\underline d_i-\rho_i^{\infty}}$.
Using \eqref{eq:ehat_flow},
$\abs{\dot\rho_i}\leq
\ell_i(\rho_i^{0}-\rho_i^{\infty})$, and
\eqref{eq:d_condition}, one obtains
\begin{align}
\dot m_i
&=\dot\rho_i-\theta_is_i
+\frac{h_i k_i}{\varrho_i}
\phi_T(\hat\xi_i)
\nonumber\\
&\geq
-\ell_i(\rho_i^{0}-\rho_i^{\infty})
-\theta_i
\pos{\bar e_i+\underline d_i
-\rho_i^{\infty}}
\nonumber\\
&\quad
+\frac{h_i k_i}
{\bar e_i+\underline d_i}
\phi_T\left(
1-\frac{\underline d_i}{\rho_i^{\infty}}
\right)>0.
\label{eq:margin_positive}
\end{align}
If instead $\hat e_i<0$ and
$m_i=\underline d_i$, then
$\hat\xi_i\leq-1+\underline d_i/\rho_i^{\infty}$ and,
by oddness and monotonicity,
$-\phi_T(\hat\xi_i)\geq
\phi_T(1-\underline d_i/\rho_i^{\infty})$. 
Since $m_i=\varrho_i+\hat e_i$ in this case,
\begin{align}
\dot m_i
&=\dot\rho_i-\theta_is_i
-\frac{h_i k_i}{\varrho_i}
\phi_T(\hat\xi_i)
\nonumber\\
&\geq
-\ell_i(\rho_i^{0}-\rho_i^{\infty})
-\theta_i
\pos{\bar e_i+\underline d_i
-\rho_i^{\infty}}
\nonumber\\
&\quad
+\frac{h_i k_i}
{\bar e_i+\underline d_i}
\phi_T\left(
1-\frac{\underline d_i}{\rho_i^{\infty}}
\right)>0.
\label{eq:margin_negative}
\end{align}
The case $\hat e_i=0$ cannot occur on the boundary because
$\varrho_i\geq\rho_i^{\infty}>\underline d_i$. Thus, the boundary
$m_i=\underline d_i$ is repelling between jumps, while jumps cannot
decrease the margin. Since
$m_i(0)>\underline d_i$, it follows that $m_i(t)\geq\underline d_i>0,
$ for all $t\in[0,t_{\max}).$ This and \eqref{eq:ehat_bound} give
\begin{align}
\abs{\hat\xi_i}=\frac{\abs{\hat e_i}}
{\abs{\hat e_i}+m_i}
\leq
\frac{\bar e_i}
{\bar e_i+\underline d_i}
=:\beta_i<1.
\label{eq:beta_proof}
\end{align}
Thus, every normalized sampled error remains in the compact set
$[-\beta_i,\beta_i]\subset(-1,1)$. Consequently, the transformation
domain is forward invariant. In particular,
\begin{equation}
\abs{\hat e_i(t)}
=\varrho_i(t)\abs{\hat\xi_i(t)}
<\varrho_i(t)
=\rho_i(t)+s_i(t),
\label{eq:sampled_local_proof}
\end{equation}
which proves \eqref{eq:sampled_local}. This and \eqref{eq:mismatch_bound} prove \eqref{eq:actual_local}.

\emph{Exclusion of Zeno behavior:} The compact normalized-error bound also gives the following uniform
follower velocity bound
\begin{equation}
\abs{\dot z_i}
\leq
\frac{k_i}{\rho_i^{\infty}}
\phi_T(\beta_i)
=:\bar v_i.
\label{eq:velocity_bound}
\end{equation}
After a follower transmits, its sampling error is zero and another event cannot
occur until its state changes by $c_i$. Equation
\eqref{eq:velocity_bound} therefore gives the lower bound
\begin{equation}
\tau_i
=
\begin{cases}
c_i/\bar v_i,&\bar v_i>0,\\[1mm]
+\infty,&\bar v_i=0,
\end{cases}
\qquad \forall i\in\mathcal V.
\label{eq:follower_dwell}
\end{equation}
For the leader, Assumption \ref{ass:leader} yields
\begin{equation}
\tau_0
=
\begin{cases}
c_0/\bar v_0,&\bar v_0>0,\\[1mm]
+\infty,&\bar v_0=0.
\end{cases}
\label{eq:leader_dwell}
\end{equation}
These are the constants $\tau_q$ used in \eqref{eq:sbar}. Since there are
only finitely many transmitters, the positive individual dwell-time bounds
exclude Zeno behavior.

\emph{Boundedness of the performance modification terms:} At each aggregate communication instant
$t_\ell>0$, define $\alpha_{i,\ell}:=
\bm{1}_{\{\abs{\hat e_i^{+}}>\abs{\hat e_i^{-}}\}}$.
Using the flow \eqref{eq:s_flow}, the reset \eqref{eq:s_reset}, and
$s_i(0)=0$, recursively unfolding the value of $s_i$ gives  $s_i(t)
=
\Delta_i\sum_{0<t_\ell\leq t}
\alpha_{i,\ell}\exp(-\theta_i(t-t_\ell)).$ An adverse update requires at least one incoming transmitter to trigger.
Hence, $\alpha_{i,\ell}\leq
\sum_{q\in\mathcal Q_i}\bm{1}_{\{q\in\mathcal S_\ell\}}$, 
and therefore $s_i(t)
\leq
\Delta_i\sum_{q\in\mathcal Q_i}
\sum_{\substack{m\geq1\\t_q^m\leq t}}
\exp(-\theta_i(t-t_q^m)).$ Ordering the events of transmitter $q$ backward from its most recent event and
using $t_q^{m+1}-t_q^m\geq\tau_q$ yields
\begin{align}
\Delta_i
\sum_{\substack{m\geq1\\t_q^m\leq t}}
\exp(-\theta_i(t-t_q^m))
\leq&
\Delta_i\sum_{r=0}^{\infty}
\exp(-r\theta_i\tau_q)\nonumber\\
=&
\frac{\Delta_i}
{1-\exp(-\theta_i\tau_q)}.
\label{eq:geometric_bound}
\end{align}
Summing over $q\in\mathcal Q_i$ proves \eqref{eq:s_result}.


\emph{Completeness and boundedness:} We show that the maximal existence time cannot be finite. Suppose, for
contradiction, that $t_{\max}<+\infty$. Since the dwell-time bounds exclude
Zeno behavior, only finitely many events can occur on $[0,t_{\max})$.
Consequently, there exists a final inter-event interval
$[t_{\ell},t_{\max})$.

On this interval, \eqref{eq:interval_invariance},
\eqref{eq:s_result}, and the compact normalized-error bound imply that
$\chi(t)$ remains in the compact set $\mathcal K
:=
\prod_{i=1}^{N}[-\beta_i,\beta_i]
\times
\prod_{i=1}^{N}[0,\bar s_i]
\times
[\underline z,\overline z]^{N+1}.$ Since $\beta_i<1$ and $\varrho_i\geq\rho_i^{\infty}>0$,
$\mathcal K$ is compactly contained in the solution domain. 
Along with the exclusion of Zeno behavior, standard
continuation arguments rule out a finite maximal endpoint. Hence,
$t_{\max}=+\infty$, and the concatenation of the unique flows and resets is
defined for all $t\geq0$. The invariant state interval, the performance modification term
bound, the compact normalized-error bound, and the control law further imply
that all closed-loop signals remain bounded.

\emph{Leader-following bound:} From
\eqref{eq:actual_error_stacked}, $z-\one_Nz_0=W^{-1}e$. Consequently, using
$\|W^{-1}\|_2=1/\sigma_{\min}(W)$, we obtain 
\eqref{eq:leader_result}.
\end{pf}

\begin{rem}
Theorem~\ref{thm:main} guarantees
$|e_i(t)|<P_i(t)$, where
$P_i:=\rho_i+\Delta_i+\bar s_i$,
$P_i^{0}:=\rho_i^{0}+\Delta_i+\bar s_i$, and
$P_i^{\infty}:=\rho_i^{\infty}+\Delta_i+\bar s_i$.
The bound is conservative because $\bar s_i$ treats every incoming event
as adverse, assigns it the full increment $\Delta_i$, and assumes that
every transmitter operates at its minimum inter-event time. Consequently,
$\bar s_i$ may substantially exceed the realized value $s_i(t)$.
\end{rem}


\begin{rem}\label{rem:theta}
It is possible to minimize the effects of $\bar s_i$ by choosing large $\theta_i$. Let $\tau_i^{\min}
:=\min_{q\in\mathcal Q_i}\tau_q.$ Then
$\bar s_i\leq
\abs{\mathcal Q_i}\Delta_i/
(1-\exp(-\theta_i\tau_i^{\min}))$ and
$\lim_{\theta_i\to\infty}\bar s_i
=\abs{\mathcal Q_i}\Delta_i$.
Hence, the limiting envelope is
$\rho_i(t)+(1+\abs{\mathcal Q_i})\Delta_i$.
\end{rem}

\begin{rem}
$\Delta_i$ can be made arbitrarily small by selecting
sufficiently small broadcasting thresholds $c_q$. Together with a sufficiently large correction-decay rate $\theta_i$, this makes the guaranteed leader-following envelope arbitrarily close to the nominal prescribed performance function $\rho_i$. This improvement is obtained at the cost of potentially more frequent broadcasts, since smaller thresholds are generally reached sooner.
\end{rem}


\begin{rem}
Since $|z_q-\hat z_q|\leq D$, selecting $c_q>D$ suppresses all
post-initial transmissions from node $q$. Thus, no positive lower broadcast
rate is imposed, although larger thresholds increase the certified bounds.
\end{rem}

\begin{rem} For practical implementation, first select $\rho_i$ according to the desired transient and
steady-state specifications. Then select the thresholds through their
contribution $\Delta_i$, and finally choose arbitrary $k_i>0$ and
$\theta_i>0$ according to Remark \ref{rem:theta}.
\end{rem}

\begin{rem}
The proposed dynamics may also be interpreted as a distributed leader observer
whose local outputs satisfy the prescribed performance guarantees of
Theorem~\ref{thm:main}. These outputs can therefore serve as reference signals
for compatible PPC tracking controllers designed for more general nonlinear
follower dynamics. 
\end{rem}






\section{Simulations}
\label{sec:simulation}

Consider a leader and six followers communicating over the directed tree in
Fig.~\ref{fig:simulation_graph}. Its nonzero weights are
$a_{31}=a_{41}=a_{52}=a_{62}=1$ and $b_1=b_2=1$. 
\begin{figure}[!t]
\centering
\begin{tikzpicture}[
    x=0.72cm,y=0.72cm,>=stealth,
    follower/.style={circle,draw,minimum size=5.3mm,inner sep=0pt},
    leader/.style={circle,draw,double,minimum size=6.2mm,inner sep=0pt}]
\node[leader]   (n0) at (0,1.3) {$0$};
\node[follower] (n1) at (-1.55,1.3) {$1$};
\node[follower] (n2) at (1.55,1.3) {$2$};
\node[follower] (n3) at (-2.1,0) {$3$};
\node[follower] (n4) at (-1.0,0) {$4$};
\node[follower] (n5) at (1.0,0) {$5$};
\node[follower] (n6) at (2.1,0) {$6$};
\draw[->] (n0)--(n1);
\draw[->] (n0)--(n2);
\draw[->] (n1)--(n3);
\draw[->] (n1)--(n4);
\draw[->] (n2)--(n5);
\draw[->] (n2)--(n6);
\end{tikzpicture}
\caption{Directed leader-rooted tree used in the simulations. Node $0$ is the
leader and nodes $1$--$6$ are the followers.}
\label{fig:simulation_graph}
\end{figure}
The leader trajectory is $z_0(t)=\sin(0.5t)$ and 
$z_0(t)\in[-1,1]$ and $\abs{\dot z_0(t)}\leq0.5$. The initial follower state is $z(0)=\col(2,-1,1.5,-2,0.8,-1.5).$ 
We select $\rho^0=\col(2.5,1.5,1,4.5,2.3,1)$ and
$\rho_i^{\infty}=0.08$, and use $\ell_i=0.6$, $k_i=1$, and $\theta_i=10$ for all followers. The broadcast threshold of every node is $c_q=0.01$. $\Delta_i=0.01$ and
$\abs{\mathcal Q_i}=1$, hence $\rho_i^{\infty}+2\Delta_i=0.10$. Figure~\ref{fig:follower_states} shows that the follower states move from distinct initial values toward the leader trajectory.
Figure~\ref{fig:agent_inputs} shows the corresponding follower inputs. 
\begin{figure}[!t]
\centering
\includegraphics[width=\columnwidth]{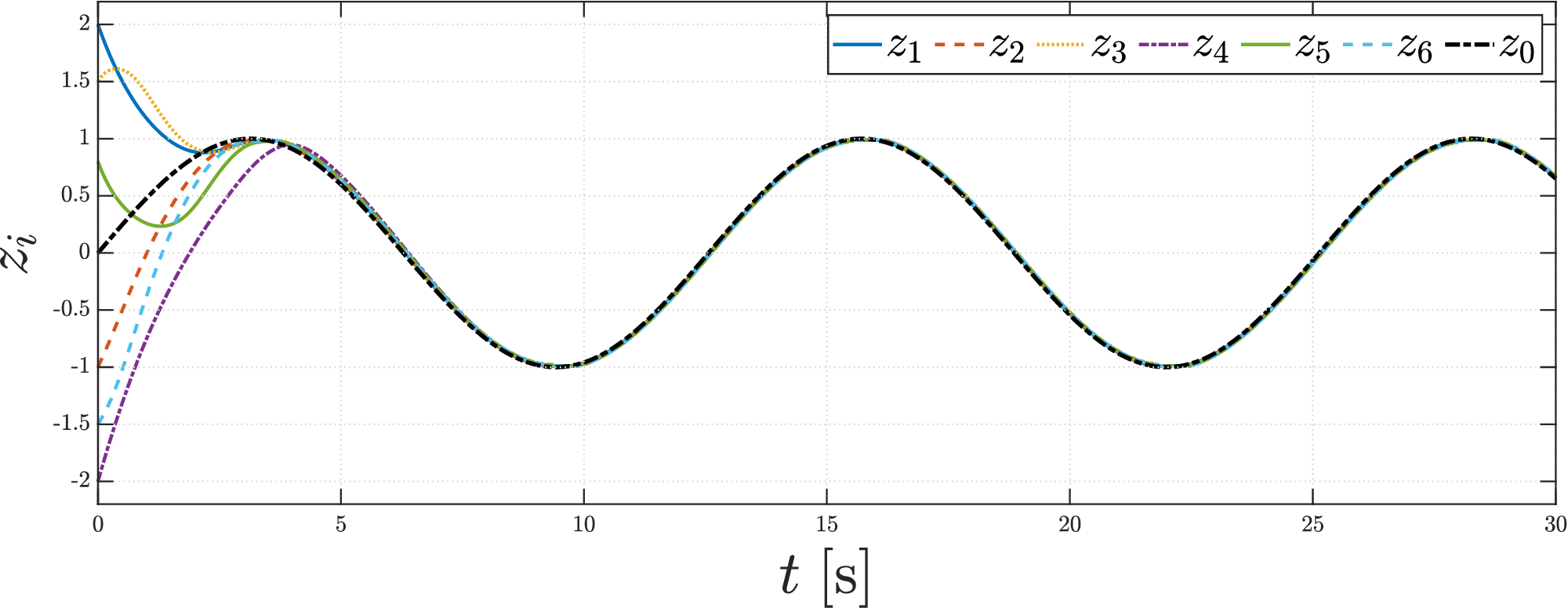}
\caption{Leader and follower trajectories for the nominal case
$c_q=0.01$ and $\theta_i=10$.}
\label{fig:follower_states}
\end{figure}
\begin{figure}[!t]
\centering
\includegraphics[width=\columnwidth]{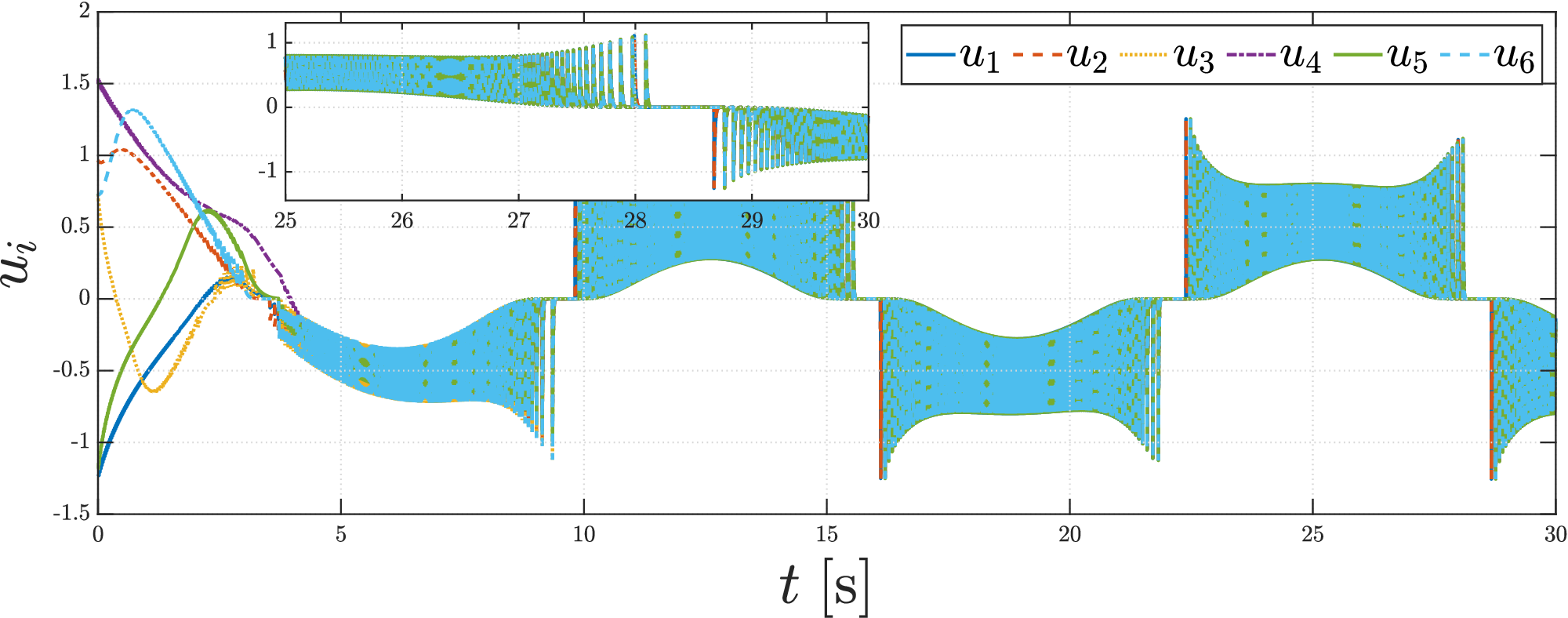}
\caption{Follower inputs for the nominal simulation under event-based broadcasting.
The inset shows the final $5\,\mathrm{s}$.}
\label{fig:agent_inputs}
\end{figure}
For visualization, define
$\xi_i:=e_i/(\rho_i+\Delta_i+s_i)$.
Figure~\ref{fig:normalized_performance} shows that
$\max_{i\in\mathcal V}|\hat\xi_i|$ and
$\max_{i\in\mathcal V}|\xi_i|$ remain below one, verifying the sampled
and continuous prescribed performance inequalities for all followers.
Figure~\ref{fig:prescribed_error_follower_4} shows the continuous
neighborhood error and its envelope for the downstream follower $4$. 
\begin{figure}[!t]
\centering
\includegraphics[width=\columnwidth]{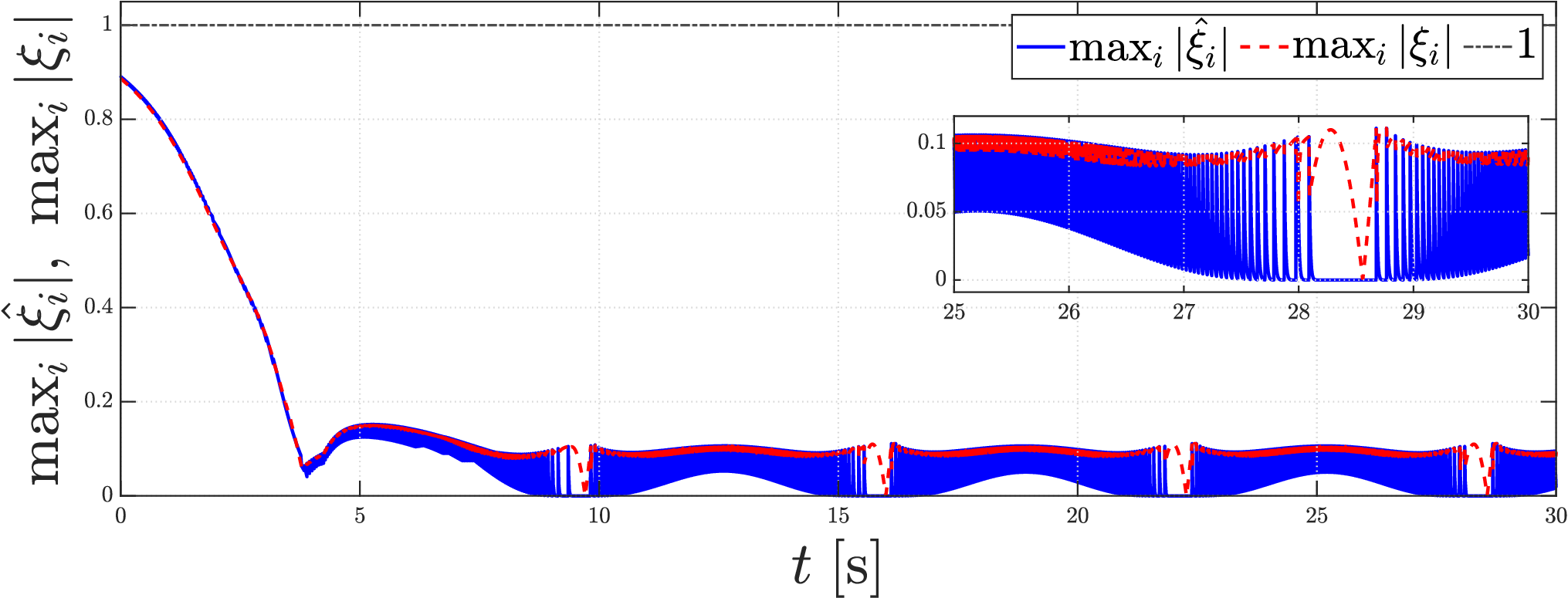}
\caption{Maximum sampled and continuous normalized neighborhood-error
magnitudes. Both remain below the prescribed boundary $1$. The inset shows
the final $5\,\mathrm{s}$.}
\label{fig:normalized_performance}
\end{figure}
\begin{figure}[!t]
\centering
\includegraphics[width=\columnwidth]{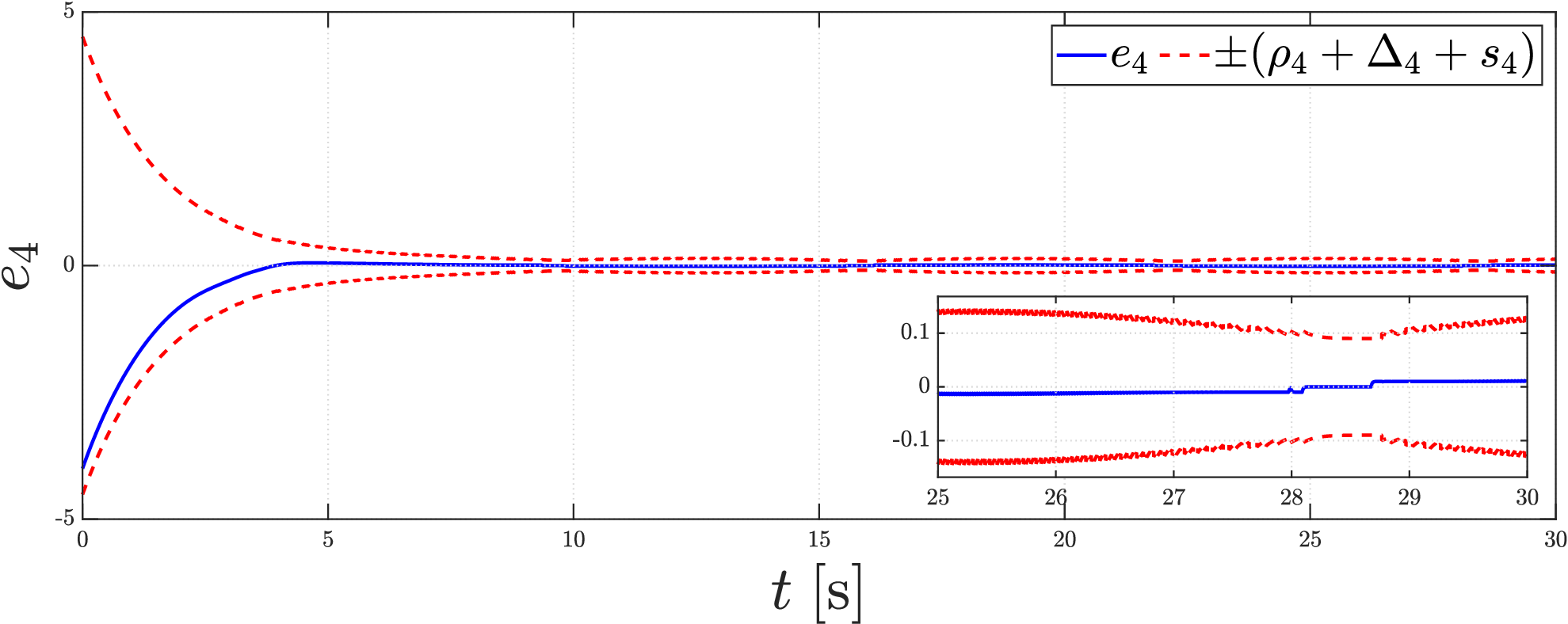}
\caption{Continuous neighborhood error of follower $4$ and its envelope
$\pm(\rho_{4}+\Delta_{4}+s_{4})$. The inset shows the final
$5\,\mathrm{s}$.}
\label{fig:prescribed_error_follower_4}
\end{figure}
The performance modification terms in Fig.~\ref{fig:correction_states} undergo
positive jumps at adverse aggregate receptions and decay exponentially
between receptions according to \eqref{eq:s_flow}. Their bounded, intermittent
behavior is consistent with Theorem~\ref{thm:main}. 
\begin{figure}[!t]
\centering
\includegraphics[width=\columnwidth]{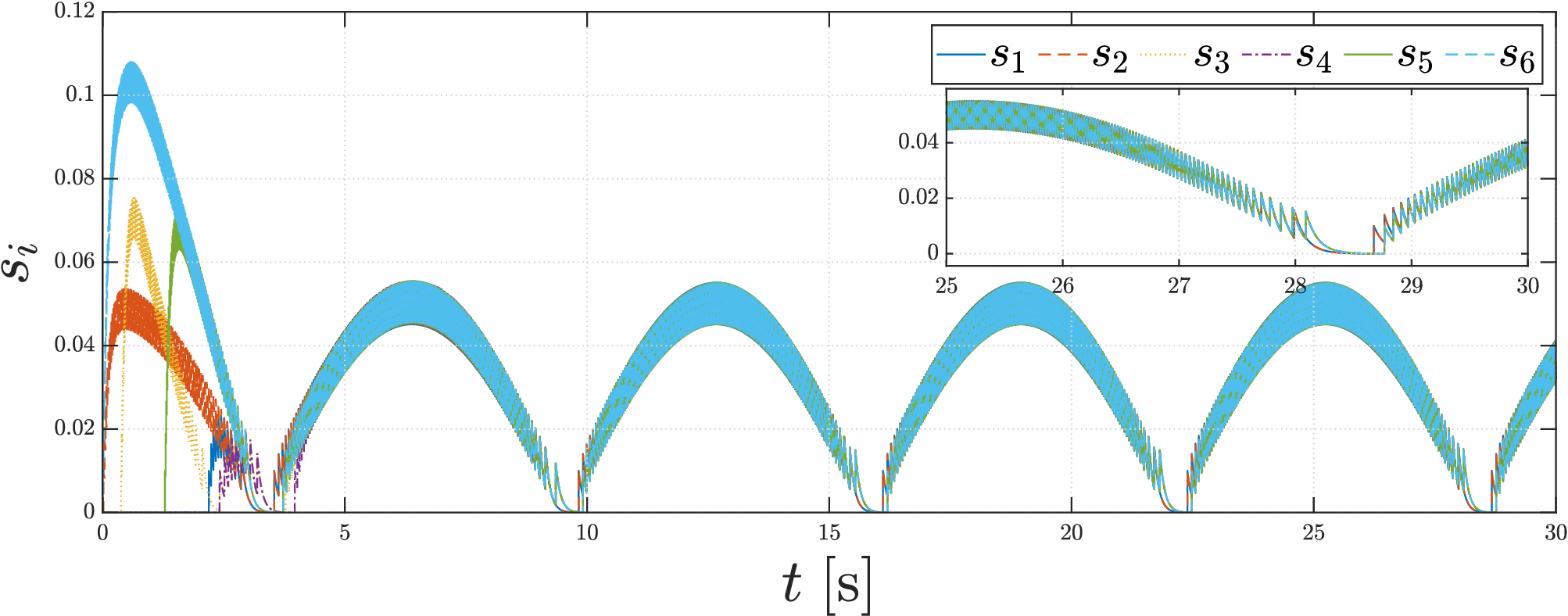}
\caption{Receiver-side performance modification terms. Adverse receptions
produce positive jumps followed by exponential decay. The inset shows the
final $5\,\mathrm{s}$.} 
\label{fig:correction_states}
\end{figure}
The nominal run produced $6934$ post-initial transmissions over
$30\,\mathrm{s}$, with a smallest observed inter-event time of
$0.00660\,\mathrm{s}$. The transmissions were asynchronous and exhibited no
numerical event accumulation.


\section{Conclusion}
\label{sec:conclusion}

A prescribed performance leader-following protocol was developed for directed
networks under transmitter-local event-based broadcasting. A receiver-side
performance modification accommodates sample-update jumps, while the analysis
guarantees unique, complete, non-Zeno solutions, bounded signals, and explicit
performance bounds. Simulations illustrate the theoretical results.


\bibliographystyle{apalike}     
\bibliography{references}        



\end{document}